\newtheorem{theorem}{Theorem}
\begin{document}
%
\title{\LARGE Measurement-based Close-in Path Loss Modeling with Diffraction for\\Rural Long-distance Communications}
%
%
%

\author{Jaedon~Park, Hong-Bae Jeon,~\IEEEmembership{Graduate Student Member,~IEEE,} \\
	Jungho Cho,~\IEEEmembership{Member,~IEEE,} and Chan-Byoung~Chae,~\IEEEmembership{Fellow,~IEEE}
	\thanks{This work was	supported by the Agency for Defense Development.}
	\thanks{J. Park and J. Cho are with the Agency for Defense Development, Daejeon, Korea (e-mail:
		\{jaedon2, jh.cho\}@add.re.kr).}
	\thanks{H.-B. Jeon and C.-B.~Chae (corresponding author) are with the School of Integrated Technology, Yonsei University, Seoul, 
		Korea (e-mail: \{hongbae08, cbchae\}@yonsei.ac.kr).}
		\thanks{(\textit{J. Park and H.-B. Jeon are co-first authors.})}
		}

%
%

\markboth{IEEE Wireless Communications Letters}
{Park \MakeLowercase{\textit{et al.}}: CI models with Diffraction}
%



\maketitle

\begin{abstract}

In this letter, we investigate rural large-scale path loss models based on the measurements in a central area of  South Korea (rural area) in spring. In particular, we develop new close-in (CI) path loss models incorporating a diffraction component. The transmitter used in the measurement system is located on a hill and utilizes omnidirectional antennas operating at 1400 and 2250~MHz frequencies. The receiver is also equipped with omnidirectional antennas and measures at positions totaling 3,858 (1,262 positions for LOS and 2,596 positions for NLOS) and 4,957 (1,427 positions for LOS and 3,530 positions for NLOS) for 1400 and 2250~MHz, respectively. This research demonstrates that the newly developed CI path loss models incorporating a diffraction component significantly reduce standard deviations (STD) and are independent of frequency, especially for LOS beyond the first meter of propagation, making them suitable for use with frequencies up to a millimeter-wave.

\end{abstract}

\begin{IEEEkeywords}
Channel model, diffraction, least-squares, measurements, path loss, rural macrocell (RMa).
\end{IEEEkeywords}

%
\IEEEpeerreviewmaketitle

%
%
%
%

\section{Introduction}
Path loss models that are applicable to frequencies below 6~GHz and various heights have been released by the 3rd Generation Partnership Project (3GPP) and International Telecommunication Union (ITU)~\cite{3gppTR36873_2017, ITURM2135_2009guidelines}. Furthermore, based on the previous models developed for frequencies below 6 GHz, 3GPP has developed new path loss models for frequencies above 6~GHz~\cite{3gppTR38900_2018, MacCartney2017Rural} and 0.5~GHz~\cite{3gppTR38901_2019}. These path loss models from 3GPP and ITU suggest modeling for line-of-sight/non-line-of-sight (LOS/NLOS) scenarios, applying to urban microcell, urban and rural macrocell, as well as indoor office environments, with LOS models applying for the breakpoint distance. Close-in (CI) path loss models have been widely used in diverse environments and have demonstrated good accuracy over a wide range of frequencies~\cite{rappaport2002wireless}. In contrast to other models such as the alpha-beta-gamma model, which include floating-offset parameters that are far removed from the physical principles of outdoor channel propagation, the CI models are based on the fundamental physical principles of signal propagation presented by Friis and Bullington~\cite{MacCartney2014Omni}. The path loss exponent (PLE) included in the CI model offers physical insights into path loss based on the propagation environment.


Consequently, numerous researchers have been drawn to the development of close-in (CI) path loss models. Among them, MacCartney and Samimi \textit{et al.} have produced omnidirectional CI path loss models from measurements at 28 and 73 GHz in Downtown Manhattan, where the path loss exponents (PLEs) for line-of-sight (LOS) at those frequencies were 2.1 and 2.0, respectively, and for non-line-of-sight (NLOS) at both frequencies were 3.4~\cite{MacCartney2014Omni,samimi2015Probabilistic}. Additionally, MacCartney \textit{et al.} have conducted field measurements at 73 GHz in rural Virginia, deriving CI models (LOS: PLE 2.16, NLOS: PLE 2.75) and height-dependent path loss exponent (CIH) models (LOS: PLE 2.31, NLOS: PLE 3.07), which indicate that the PLE may not be frequency-dependent~\cite{MacCartney2017Rural}. Sun \textit{et al.} have also provided CI path loss models, along with alpha-beta-gamma (ABG) and frequency-weighted path loss exponent (CIF) models, for frequencies ranging from 2 to 73 GHz and for urban macrocell (UMa), urban microcell (UMi), and indoor hotspot (InH) scenarios. The CI models for the UMa scenario had a LOS PLE of 2.0 and an NLOS PLE of 2.9, both of which were frequency-independent~\cite{Sun2016Investi}.

For a more accurate path loss analysis, ITU introduced diffraction loss models such as single knife-edge (sKE), and delta-Bullington (DB) \cite{ITURP526_2013, ITURP1812_2019}. The sKE diffraction loss model treats obstacles as a knife-edge with negligible thickness \cite{ITURP526_2013}, while the DB diffraction loss model combines the Bullington part of diffraction with spherical-Earth diffraction \cite{ITURP1812_2019}. 
{Based on the review of literature, it is reasonable to assume that the inclusion of diffraction effect in the CI path loss model, which was neglected in previous studies~\cite{MacCartney2017Rural, MacCartney2014Omni, Rappaport2015Wideband}, can improve the accuracy of the path loss model. Additionally, it is plausible to assume that in outdoor settings, the path loss exponent of the CI models might be frequency-independent beyond the initial meter of propagation loss.}


{This letter presents modified CI path loss models for the rural macrocell (RMa) scenario~\cite{3gpptech, 3gpp24, metis}, incorporating a diffraction component to the general form. To the best of our knowledge, our research team is the first to take into account the diffraction component in the CI path loss model. These models are based on measurements conducted at 1400 and 2250~MHz in northern Seoul.}
As the ITU is investigating new frequency bands with strong radio properties for International Mobile Telecommunications (IMT), including those below 10~GHz~\cite{ITU23} and in the 1$\sim$2~GHz range~\cite{FAITU}, our tested frequencies of 1400 and 2500~MHz can serve as valuable examples for further research into path loss modeling. Our proposed models offer improved accuracy compared to models without a diffraction component. Additionally, we observe that our proposed CI path loss model with diffraction exhibits frequency-independence beyond the first meter of propagation, unlike existing models in 3GPP~\cite{MacCartney2017Rural,3gpptech, 3gpp24}. Hence, we can conclude that our proposed model is suitable for use in upper-band spectrum ranges~\cite{HBFSO, hjtvt}, including millimeter-wave~\cite{map, IAB}, where diffraction loss becomes a significant issue~\cite{SA}.


\section{CI Path Loss Model with Diffraction}
The general CI path loss model is given by~\cite{MacCartney2017Rural, CIPL}
\begin{equation} \label{eq:eq_CIpathlossOrig}
\text{PL}^{\text{CI}}(f,d)[\mathrm{dB}] = 20 \log_{10}\Big(\frac{4\pi f d_0}{c}\Big) +  n_{\text{CI}} \cdot 10 \cdot \log_{10}\left(\frac{d}{d_0}\right) + \chi_{\sigma},
\end{equation}
where $d$ is the path distance in meters between the transmitter and receiver, $d_0$ is the close-in free space reference distance in meters, $f$ is the carrier frequency in Hz, $c$ is the speed of light in m/s, $n_{\text{CI}}$ is the PLE, and the Gaussian random variable $\chi_{\sigma}$ represents the shadow fading with zero-mean and STD $\sigma$ in dB \cite{Sun2016Investi, Rappaport2015Wideband, MacCartney2014Omni, samimi2015Probabilistic, Sulyman2016Directional, rappaport2015millimeter}. In ($\ref{eq:eq_CIpathlossOrig}$), the first term on the right-hand side is the free-space path loss at the reference distance $d_0$, which is given by
\begin{equation} \label{eq:eq_FSPLd0}
\text{FSPL}(f,d_0)[\mathrm{dB}] = 20 \log_{10}\Big(\frac{4\pi f d_0}{c}\Big).
\end{equation}
Therefore, the path loss model can be rewritten as 
\begin{equation} \label{eq:eq_CIpathloss}
\text{PL}^{\text{CI}}(f_c,d)[\mathrm{dB}] = 32.4 + 20 \log_{10}f_c + n_{\text{CI}} \cdot 10 \log_{10}d  + \chi_{\sigma}
\end{equation}
where $f_c$ is the carrier frequency in GHz. 

Adding the diffraction component to the general CI model, the CI path loss model with diffraction can be expressed as 
\begin{multline} \label{eq:eq_CIpathlossDiff}
\text{PL}^{\text{CI}}(f_c,d,\phi)[\mathrm{dB}] = 32.4 + 20 \log_{10}f_c + n_{\text{CI}} \cdot 10 \log_{10}d \\
+ \alpha \cdot \phi + \chi_{\sigma},
\end{multline}
where $\phi$ is the diffraction component in dB, and the model parameter for diffraction $\alpha$ is the coefficient showing the dependence of path loss on the diffraction.

\begin{figure}[t]
\centering
\includegraphics[width=1.6in]{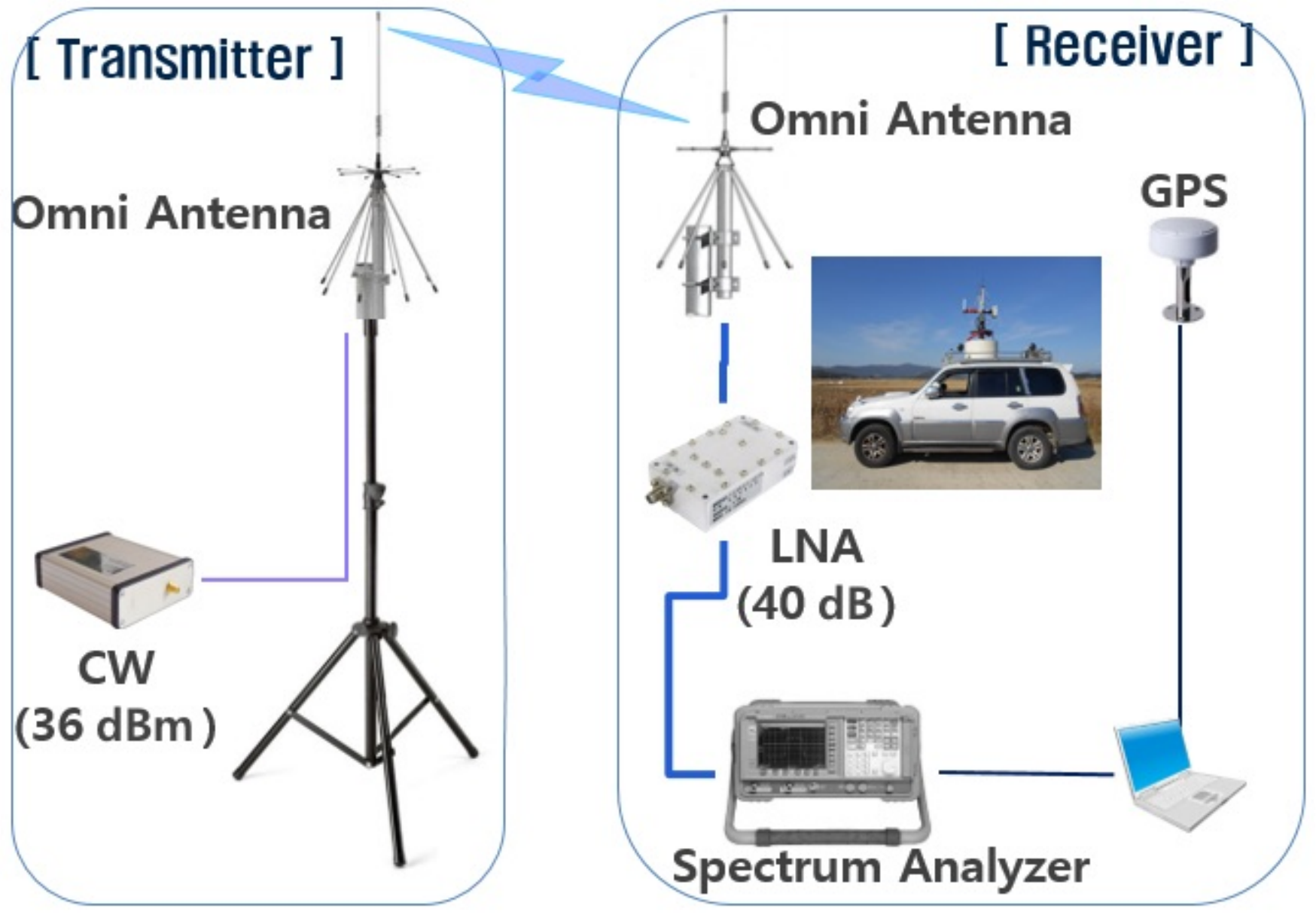}
\caption{Configuration of the measurement system. The transmitter is located on a hill and the receiver is mounted on a vehicle.}
\label{MeasureSys}
\end{figure}
\section{Least-Square Method for Linear Regression}
We use the estimated vector
\begin{equation}
\mathbf{Y}= A + B\mathbf{X_1} + C \mathbf{X_2}
\label{estv}
\end{equation}
for CI path loss application. Here, $A = 32.4 + 20 \log_{10}f_c $, $B = n_{\text{CI}}$, $\mathbf{X_1}= [x_{1,1}, x_{1,2}, \cdots , x_{1,N}]^{\mathrm{T}} =10 \log_{10}\mathbf{d}$ for the distance vector $\mathbf{d} = [d_1, d_2, \cdots , d_N]^{\mathrm{T}}$, $C=\alpha$, and $\mathbf{X_2}= [x_{2,1}, x_{2,2}, \cdots , x_{2,N}]^{\mathrm{T}} = \mathbf{\Phi}$ for the diffraction vector $\mathbf{\Phi} = [\phi_1, \phi_2, \cdots , \phi_N]^{\mathrm{T}}$. The $n$th element $y_n$ of the estimated vector $\mathbf{Y}$ can therefore be written as $y_{n}=A + B x_{1,n} + C x_{2,n}$. 
{Based on the CI model, we set $A$ as a constant and estimate $B$ and $C$ by least-square method, which is given by following theorem:
	\begin{theorem}
	\label{cim}
The solution B, and C is given as follows:
\begin{equation} \label{eq:eq_matrixfinalinvCIdiff_T}
\begin{split}
\begin{pmatrix} B \\ C 
\end{pmatrix}
= &
\begin{pmatrix}
\sum_{n=1}^{N} x_{1,n}  x_{1,n} & \sum_{n=1}^{N} x_{1,n} x_{2,n}  \\
\sum_{n=1}^{N} x_{2,n}  x_{1,n} & \sum_{n=1}^{N} x_{2,n} x_{2,n}  \\
\end{pmatrix}^{-1}
\\
&\begin{pmatrix} 
\sum_{n=1}^{N} x_{1,n}  \widehat{y_{n}} - \sum_{n=1}^{N} A x_{1,n} \\ \sum_{n=1}^{N} x_{2,n}  \widehat{y_{n}} - \sum_{n=1}^{N} A x_{2,} \\
\end{pmatrix},
\end{split}
\end{equation}
for the estimated vector $\mathbf{Y}=[y_1, y_2, \cdots , y_N]^{\mathrm{T}}$ and the measured data $\mathbf{\widehat{Y}} = [\widehat{y_{1}}, \widehat{y_{2}} \cdots, \widehat{y_{N}}]^{\mathrm{T}}$.
		\end{theorem}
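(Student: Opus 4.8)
The plan is to recognize ($\ref{estv}$) as an ordinary linear least-squares problem in the two free parameters $B=n_{\text{CI}}$ and $C=\alpha$, with $A=32.4+20\log_{10}f_c$ held fixed by the CI model. First I would introduce the residual sum of squares
\begin{equation*}
J(B,C)=\sum_{n=1}^{N}\bigl(\widehat{y_n}-A-Bx_{1,n}-Cx_{2,n}\bigr)^2,
\end{equation*}
which is exactly the quantity that the least-square estimate of $(B,C)$ minimizes over all real $B,C$. Because $J$ is a quadratic in $(B,C)$ whose purely quadratic part is a sum of squares, it is convex, so every stationary point is a global minimizer; it therefore suffices to solve $\partial J/\partial B=\partial J/\partial C=0$.

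Next I would differentiate termwise, obtaining $\partial J/\partial B=-2\sum_{n=1}^{N}x_{1,n}\bigl(\widehat{y_n}-A-Bx_{1,n}-Cx_{2,n}\bigr)$ and the analogous expression with $x_{2,n}$ in place of $x_{1,n}$ for $\partial J/\partial C$. Setting both to zero, expanding, and moving every term free of $B$ and $C$ to the right-hand side yields the normal equations
\begin{equation*}
\begin{cases}
B\sum_{n} x_{1,n}x_{1,n}+C\sum_{n} x_{1,n}x_{2,n}=\sum_{n} x_{1,n}\widehat{y_n}-A\sum_{n} x_{1,n},\\[2pt]
B\sum_{n} x_{2,n}x_{1,n}+C\sum_{n} x_{2,n}x_{2,n}=\sum_{n} x_{2,n}\widehat{y_n}-A\sum_{n} x_{2,n}.
\end{cases}
\end{equation*}
Writing this $2\times2$ system in matrix form and left-multiplying by the inverse of its coefficient matrix reproduces ($\ref{eq:eq_matrixfinalinvCIdiff_T}$) verbatim, so the derivation is complete once that inverse is justified.

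The only step that is not mechanical is the invertibility of the coefficient (Gram) matrix $\mathbf{G}=[\mathbf{X_1}\ \mathbf{X_2}]^{\mathrm{T}}[\mathbf{X_1}\ \mathbf{X_2}]$. Its determinant is $\bigl(\sum_{n} x_{1,n}^2\bigr)\bigl(\sum_{n} x_{2,n}^2\bigr)-\bigl(\sum_{n} x_{1,n}x_{2,n}\bigr)^2$, which by the Cauchy--Schwarz inequality is nonnegative and vanishes only when $\mathbf{X_1}=10\log_{10}\mathbf{d}$ and $\mathbf{X_2}=\mathbf{\Phi}$ are linearly dependent. Since the measured log-distance and diffraction vectors are not proportional, $\mathbf{G}$ is positive definite, the inverse in ($\ref{eq:eq_matrixfinalinvCIdiff_T}$) exists, and because the Hessian of $J$ equals $2\mathbf{G}$ and is therefore positive definite, the stationary point is the unique global minimizer. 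I would record this non-collinearity as the standing assumption of the theorem; it is the main thing to watch, though for real measurement data it is automatically satisfied.
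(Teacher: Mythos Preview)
Your proposal is correct and follows essentially the same route as the paper: define the sum-of-squares cost, differentiate with respect to $B$ and $C$, set the derivatives to zero to obtain the two normal equations, rewrite them as a $2\times2$ linear system, and invert. You go slightly beyond the paper by explicitly noting convexity and justifying invertibility of the Gram matrix via Cauchy--Schwarz, which the paper simply assumes; otherwise the arguments coincide.
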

		\begin{proof}
See Appendix A.
	\end{proof}}
\begin{figure}[t]
\centering
\includegraphics[width=2.1in]{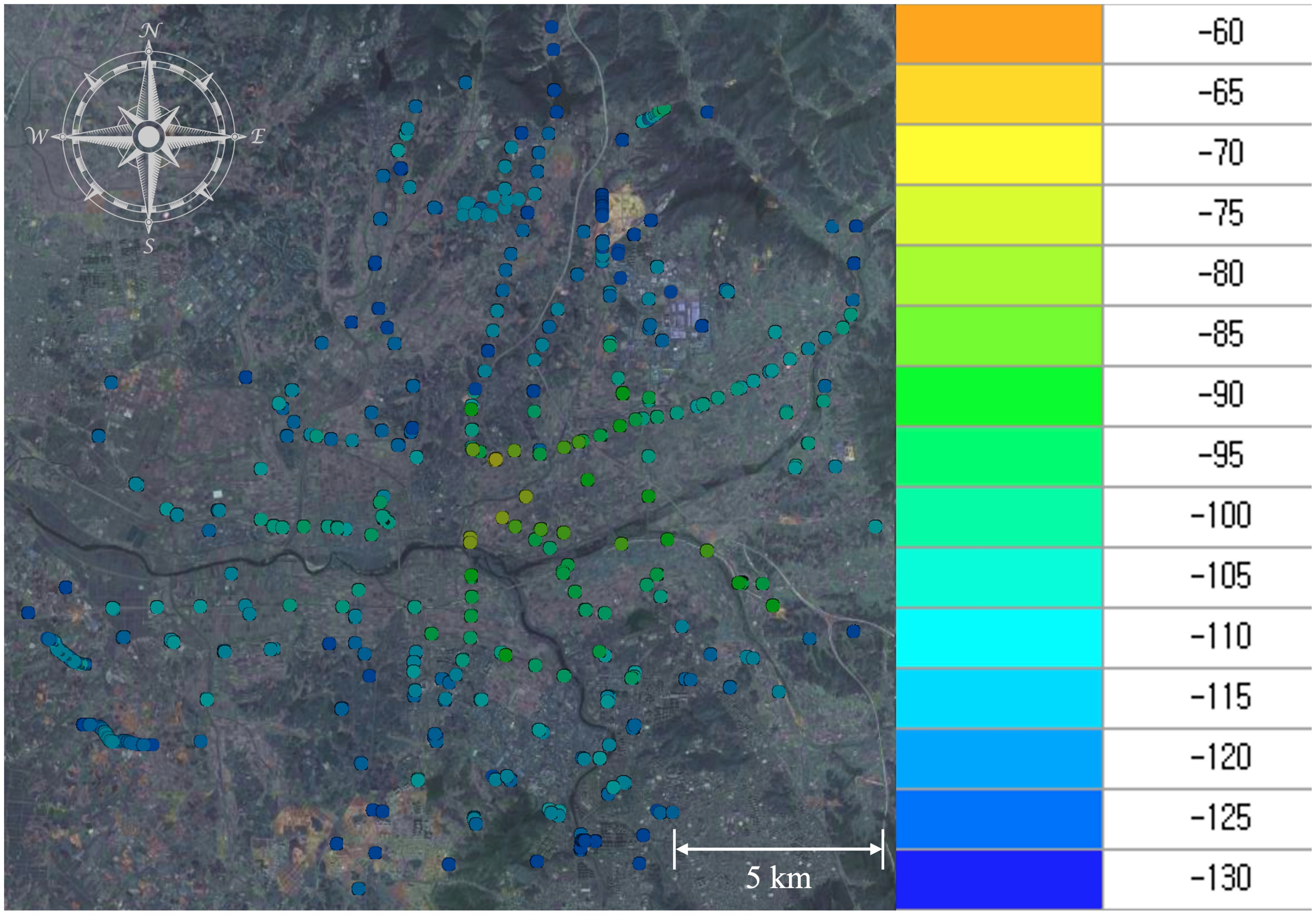}

\caption{Illustration of the measurement positions. The transmitter is located in the center of the map. The colored dots are the measured points of the receiver. The measurement distance ranges up to approximately 12.5 km. Note that most prior work carried out the measurement within a few kms with limited positions. }

\label{IksanMeasureMap}
\end{figure}
\section{Measurement system and analysis}
\subsection{Measurement Scenarios}

The path loss was measured at 1400 and 2250 MHz frequencies, as shown in Fig.~\ref{MeasureSys}. To carry out these measurements, we utilized a system comprising a continuous wave generator, power amplifier, transceiver antenna, low noise amplifier, and spectrum analyzer. For this experiment, we employed omnidirectional transmit and receive antennas and synchronized the global positioning system (GPS) with the measurement system using the RS-232 protocol. The measurements were taken in a rural region of South Korea that includes agricultural land and low hills. The transmitter was placed on a hill, 15 m above ground level, while the receiver was mounted on a vehicle with the antenna at a height of 2 m above ground level. Detailed parameters of the measurement system is given in Table~\ref{tb11}.
\begin{figure}[t]
\centering
\includegraphics[width=2.3in]{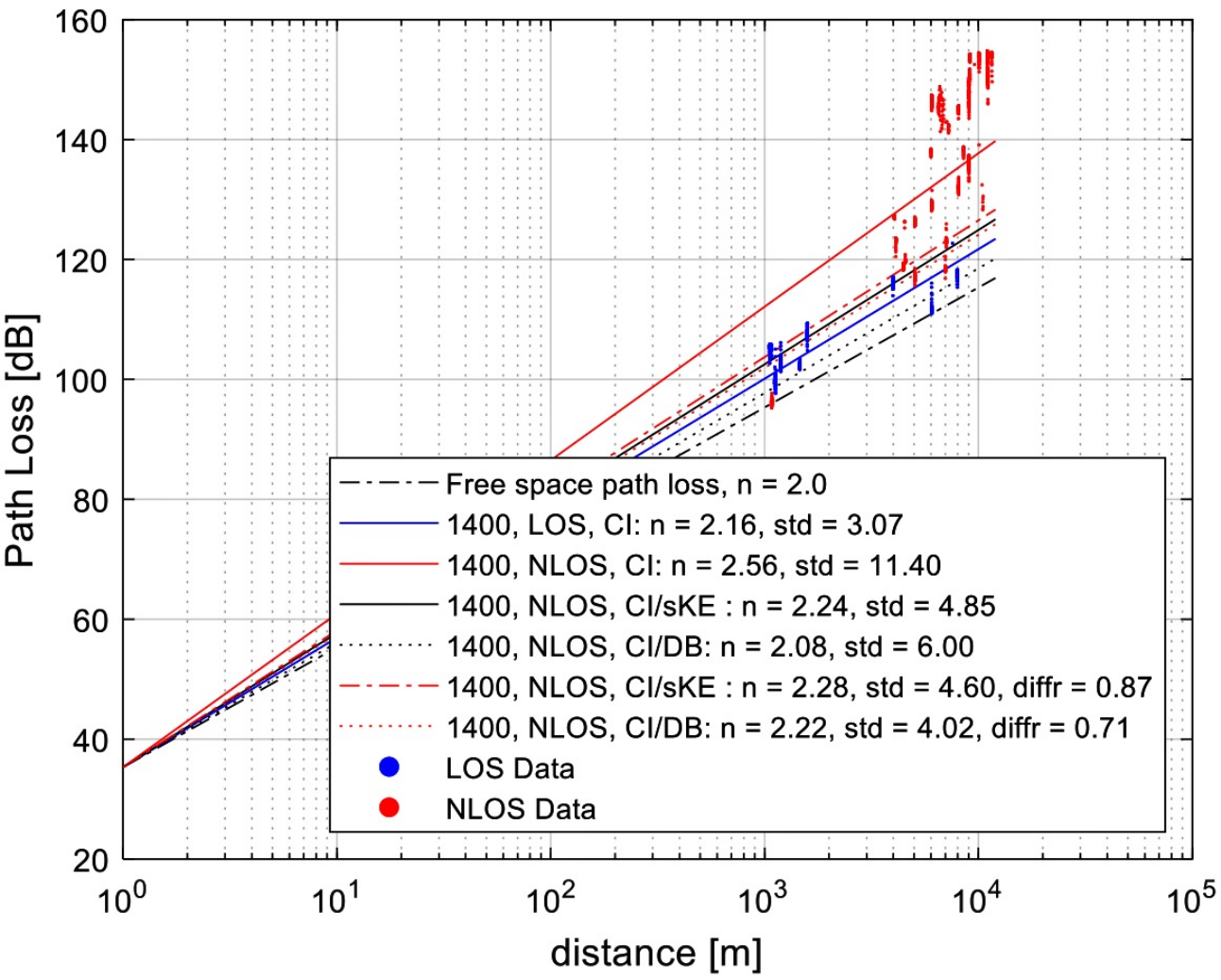}
\caption{Comparison of CI path loss models analyzed for the measurement data in the spring at the frequency of 1400 MHz, where sKE and DB denote single knife edge and delta-Bullington, respectively.}
\label{figpathlossCI1400IksanSpr}
\end{figure}
\begin{table}[t]
\centering
\caption{Parameters of the Measurement System} 	
\label{tb11}
\begin{tabular}{|c|c|}
	\hline
	Parameter&Value \\
	\hline \hline
	Center frequencies & 1400, 2250 (MHz)\\
	\hline
	CW transmitter output  & 36 (dBm) \\
	\hline
	SA bandwidth  & 1 (kHz) \\
	\hline
	LNA  & 40 (dB) \\
	\hline
	Noise level  & -110 (dBm) \\
	\hline
	Effective Reception Level  & -100 (dBm) \\
	\hline
\end{tabular}
\end{table}

The measurement positions and path are illustrated in Fig.~\ref{IksanMeasureMap}, where the dotted circles indicate the measurement locations. The measured power varied between -60 and -130~dBm, and the distance between the transceiver extended up to approximately 12.5~km. It is worth noting that during the power measurements, we assumed that the vehicles were stationary.
\subsection{Analysis Results for CI Path Loss Model}
In this section, we present the analyzed CI path loss results for the measured data based on Sections III and IV.

Fig. \ref{figpathlossCI1400IksanSpr} shows the analyzed path loss models along with the measured path loss data points (scatter plots) at the frequency of 1400 MHz. From the LOS measurement data, the general CI path loss model has a PLE value of 2.16, which is slightly higher than the free-space PLE of 2.00 \cite{MacCartney2017Rural, Sun2016Investi, Rappaport2015Wideband, MacCartney2014Omni, samimi2015Probabilistic, Sulyman2016Directional, rappaport2015millimeter}. With the measured data in NLOS, the model shows a PLE and STD $\chi_\sigma$ of 2.56 and 11.40~dB, figures that are much higher than those of LOS. The original CI path loss models can be summarized as
\begin{multline} \label{eq:eq_1400CIpathlossOutLOS}
\text{PL}_\text{LOS}^{\text{CI}}(f_c,d)[\mathrm{dB}] = 32.4 + 20 \log_{10}f_c + 21.6 \log_{10}d + \chi_{\sigma},\\  \text{LOS measurement},
\end{multline}
\begin{multline} \label{eq:eq_1400CIpathlossOutNLOS}
\text{PL}_\text{NLOS}^{\text{CI}}(f_c,d)[\mathrm{dB}] = 32.4 + 20 \log_{10}f_c + 25.6 \log_{10}d + \chi_{\sigma},\\  \text{NLOS measurement}.
\end{multline}

\begin{table*}[h]
\centering
\caption{{Parameters of PLEs and error statistics of the CI path loss models for the measurement data\\ in a rural area of South Korea in spring at frequency of 1400 and 2250~MHz.}} 	
\label{CIcompareIksan14002250Spring}
\begin{tabular}{|c|c|c|c|c|}
	\hline
	Model&PLE&STD& $\alpha$ & Frequency and note \\
	\hline \hline
	$\text{PL}_{\text{LOS,28}}^{\text{CI-Rural}}$ & 2.1 & 3.6 & - & 28 GHz with Floating-intercept \cite{MacCartney2014Omni}\\
	\hline
	$\text{PL}_{\text{NLOS,28}}^{\text{CI-Rural}}$ & 2.6 & 9.6 &- & 28 GHz with Floating-intercept \cite{MacCartney2014Omni}\\
	\hline
	$\text{PL}_{\text{LOS,73}}^{\text{CI-Rural}}$ & \textbf{2.16} & 1.7 & - & 73 GHz Rural: 14 LOS positions \cite{MacCartney2017Rural}\\
	\hline
	$\text{PL}_{\text{NLOS,73}}^{\text{CI-Rural}}$ & 2.75 & 6.7 &- & 73 GHz Rural: 17 NLOS positions \cite{MacCartney2017Rural}\\
	\hline
	\hline
	$\text{PL}_{\text{LOS}}^{\text{CI-Rural}}$ & \textbf{2.16} & 3.07 &- & 1400 MHz Rural: 1,262 LOS positions \\
	\hline
	$\text{PL}_{\text{NLOS}}^{\text{CI-Rural}}$ & 2.56 & 11.40 &- & 1400 MHz Rural: 2,596 NLOS positions\\
	\hline
	$\text{PL}_{\text{NLOS, sKE}}^{\text{CI-Rural}}$ & 2.24 & 4.85 & 1.00 & 1400 MHz Rural: 2,596 NLOS positions\\
	\hline
	$\text{PL}_{\text{NLOS, DB}}^{\text{CI-Rural}}$ & 2.08 & 6.00 & 1.00 & 1400 MHz Rural: 2,596 NLOS positions\\
	\hline
	$\text{PL}_{\text{NLOS, sKE}}^{\text{CI-Rural}}$ & 2.28 & \textbf{4.60}& \textbf{0.87} & 1400 MHz Rural: 2,596 NLOS positions\\
	\hline
	$\text{PL}_{\text{NLOS, DB}}^{\text{CI-Rural}}$ & 2.22 & \textbf{4.02} & \textbf{0.71} & 1400 MHz Rural: 2,596 NLOS positions\\
	\hline

	$\text{PL}_{\text{LOS}}^{\text{CI-Rural}}$ & \textbf{2.16} & 3.33 &- & 2250 MHz Rural: 1,427 LOS positions\\
	\hline
	$\text{PL}_{\text{NLOS}}^{\text{CI-Rural}}$ & 2.67 & 8.98 &- & 2250 MHz Rural: 3,530 NLOS positions\\
	\hline
	$\text{PL}_{\text{NLOS, sKE}}^{\text{CI-Rural}}$ & 2.27 & 4.92 & 1.00 & 2250 MHz Rural: 3,530 NLOS positions\\
	\hline
	$\text{PL}_{\text{NLOS, DB}}^{\text{CI-Rural}}$ & 2.06 & 5.47 & 1.00 & 2250 MHz Rural: 3,530 NLOS positions\\
	\hline
	$\text{PL}_{\text{NLOS, sKE}}^{\text{CI-Rural}}$ & 2.35 & \textbf{4.42} & \textbf{0.78} & 2250 MHz Rural: 3,530 NLOS positions\\
	\hline
	$\text{PL}_{\text{NLOS, DB}}^{\text{CI-Rural}}$ & 2.26 & \textbf{3.44} & \textbf{0.66} & 2250 MHz Rural: 3,530 NLOS positions\\
	\hline
\end{tabular}
\end{table*}

When we extend the CI model to fully include diffraction component $\phi$ with coefficient $\alpha=1.00$ for the NLOS measurement data, the PLE values reduce to 2.24 and 2.08, while the STDs of the shadow fading dramatically decrease to 4.85 and 6.00, for sKE and DB diffractions, respectively. Here, we just included the diffraction components without regressions. Hence, the CI path loss models with $\alpha=1.00$ can be expressed as
\begin{multline} \label{eq:eq_1400CIpathlossOutsKEDiffull}
\text{PL}_\text{NLOS}^{\text{CI, sKE}}(f_c,d,\phi)[\mathrm{dB}] = 32.4 + 20 \log_{10}f_c  + 22.4 \log_{10}d  \\
+ \textbf{1.00} \cdot \phi + \chi_{\sigma_{\mathrm{sKE}}},  \text{with knife-edge and $\alpha=1.00$},
\end{multline}
\begin{multline} \label{eq:eq_1400CIpathlossOutDBDiffull}
\text{PL}_\text{NLOS}^{\text{CI, DB}}(f_c,d,\phi)[\mathrm{dB}] = 32.4 + 20 \log_{10}f_c + 20.8 \log_{10}d \\
 + \textbf{1.00} \cdot \phi + \chi_{\sigma_{\mathrm{DB}}}, \text{with delta-Bullington and $\alpha=1.00$}.
\end{multline}

After incorporating the regressed coefficient for diffraction $\alpha$ and PLE, we observed that the PLE values changed to 2.28 and 2.22, while $\alpha$ was regressed to 0.87 and 0.71 for sKE and DB, respectively. It is worth noting that the STDs of the shadow fading further decreased to 4.60 and 4.02~dB for sKE and DB, respectively. The analysis results indicate that the PLE values approach 2.00 (the free-space case) when the diffraction component is excluded from the path loss regression. Therefore, it is evident that the modified CI path loss models, which include a diffraction component, are effective in reducing the estimation error. The CI path loss models with the regressed $\alpha$ can be expressed as follows:
\begin{multline} \label{eq:eq_1400CIpathlossOutsKEDiff}
\text{PL}_\text{NLOS}^{\text{CI, sKE}}(f_c,d,\phi)[\mathrm{dB}] = 32.4 + 20 \log_{10}f_c  + 22.8 \log_{10}d  \\
+ \textbf{0.87} \cdot \phi + \chi_{\sigma_{\mathrm{sKE}}},  \text{with knife-edge and regressed $\alpha$},
\end{multline}
\begin{multline} \label{eq:eq_1400CIpathlossOutDBDiff}
\text{PL}_\text{NLOS}^{\text{CI, DB}}(f_c,d,\phi)[\mathrm{dB}] = 32.4 + 20 \log_{10}f_c + 22.2 \log_{10}d \\
 + \textbf{0.71} \cdot \phi + \chi_{\sigma_{\mathrm{DB}}}, \text{with delta-Bullington and regressed $\alpha$}.
\end{multline}

\begin{figure}[t]
\centering
\includegraphics[width=2.3in]{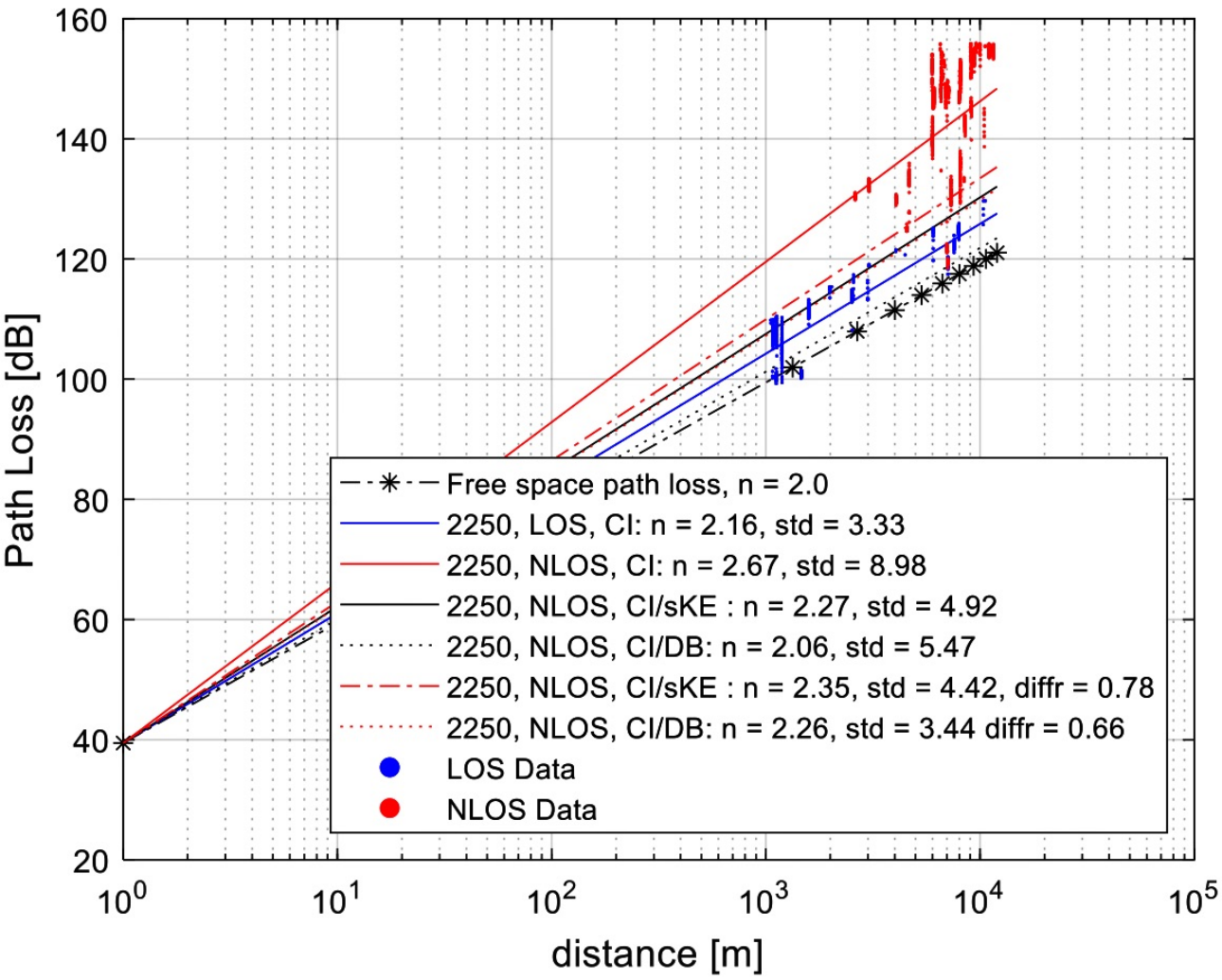}
\caption{Comparison of CI path loss models analyzed for the measurement data in the spring at a frequency of 2250 MHz, where sKE and DB denote single knife edge and delta-Bullington, respectively.}
\label{figpathlossCI2250IksanSpr}
\end{figure}


The path loss models analyzed at a frequency of 2250 MHz, along with the scatter plots of the measured path loss data, are illustrated in Fig.~\ref{figpathlossCI2250IksanSpr}. The CI path loss models with diffraction can be provided from the measured data  at 2250 MHz as
\begin{multline} \label{eq:eq_2250CIpathlossOutsKEDiffull}
\text{PL}_\text{NLOS}^{\text{CI, sKE}}(f_c,d,\phi)[\mathrm{dB}] = 32.4 + 20 \log_{10}f_c  + 22.7 \log_{10}d  \\
+ \textbf{1.00} \cdot \phi + \chi_{\sigma_{\mathrm{sKE}}},  \text{with knife-edge and $\alpha=1.00$},
\end{multline}
\begin{multline} \label{eq:eq_2250CIpathlossOutDBDiffull}
\text{PL}_\text{NLOS}^{\text{CI, DB}}(f_c,d,\phi)[\mathrm{dB}] = 32.4 + 20 \log_{10}f_c + 20.6 \log_{10}d \\
 + \textbf{1.00} \cdot \phi + \chi_{\sigma_{\mathrm{DB}}}, \text{with delta-Bullington and $\alpha=1.00$},
\end{multline}
\begin{multline} \label{eq:eq_2250CIpathlossOutsKEDiff}
\text{PL}_\text{NLOS}^{\text{CI, sKE}}(f_c,d,\phi)[\mathrm{dB}] = 32.4 + 20 \log_{10}f_c  + 23.5 \log_{10}d  \\
+ \textbf{{0.78}} \cdot \phi + \chi_{\sigma_{\mathrm{sKE}}},  \text{with knife-edge and regressed $\alpha$},
\end{multline}
\begin{multline} \label{eq:eq_2250CIpathlossOutDBDiff}
\text{PL}_\text{NLOS}^{\text{CI, DB}}(f_c,d,\phi)[\mathrm{dB}] = 32.4 + 20 \log_{10}f_c + 22.6 \log_{10}d \\
 + \textbf{{0.66}} \cdot \phi + \chi_{\sigma_{\mathrm{DB}}}, \text{with delta-Bullington and regressed $\alpha$}.
\end{multline}

{Furthermore, to enhance the clarity of comparison, we have plotted the path loss data of proposed model and existing data in Fig.~\ref{cidiff}. As shown in Figs.~\ref{figpathlossCI2250IksanSpr} and~\ref{cidiff}, for the original CI path loss model, the LOS PLE value at 2250~MHz is 2.16, which is exactly the same one at 1400~MHz in Fig.~\ref{figpathlossCI1400IksanSpr} and at 73~GHz measurement in~\cite{MacCartney2017Rural}. In case of NLOS path loss models, the PLEs at 2250~MHz with sKE and DB show very small differences as those at 1400~MHz, where the maximum difference is given by 0.07. Additionally, the difference between the measured results in~\cite{MacCartney2017Rural} and~\cite{MacCartney2014Omni} and the results obtained at 1400/2250~MHz is within 0.19, as depicted in Fig~\ref{cidiff}.
In conclusion, the PLE values of CI path loss models at 2250 MHz are nearly the same as those at 1400 MHz, and there is little variation when compared to the measurements taken at 28 and 73~GHz.}
\begin{figure}[t]
\centering
\includegraphics[width=2.2in]{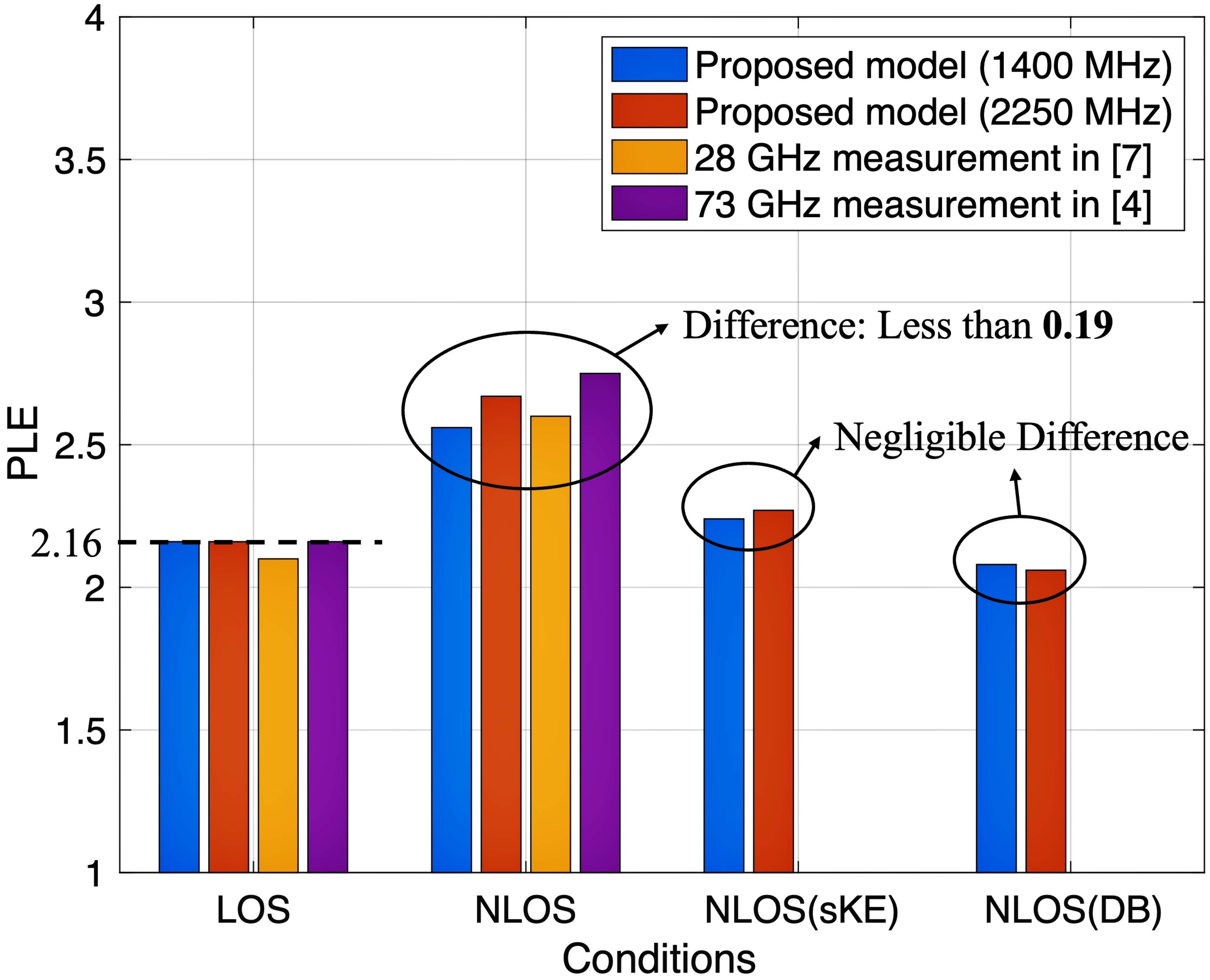}
\caption{Comparison of CI path loss models analyzed for the measurement data at frequency of 1440 and 2250~MHz in proposed model,~\cite{MacCartney2017Rural} and~\cite{MacCartney2014Omni}, where sKE and DB denote single knife edge and delta-Bullington, respectively.}
\label{cidiff}
\end{figure}

{The paper~\cite{MacCartney2017Rural} reported that the 73~GHz measurement yields similar results to the path loss models already specified in 3GPP~\cite{3gpptech}, and can be extrapolated to the frequencies from 0.5 to 100~GHz supported by various measurements along the mmWave bands~\cite{3gppTR38900_2018, 3gppTR38901_2019, 5Gbands, 5Gvtc, Rappaport2015Wideband}. Combining these results, we can draw the conclusion that including a diffraction component in the CI model can considerably enhance the precision of the path loss model and exhibit frequency-independent properties beyond the first meter of propagation, particularly in situations with LOS environments.}

Table~\ref{CIcompareIksan14002250Spring} summarizes the PLE values and STD of the rural CI models at the frequencies of 1400 and 2250~MHz, along with the comparison to the results of reference works \cite{MacCartney2017Rural, MacCartney2014Omni} for millimeter-wave in rural scenarios. It can be observed from the table that the rural LOS PLEs at 1400 and 2250~MHz in South Korea are identical to the rural LOS PLE at 73~GHz in Southwest Virginia, USA. Additionally, our proposed model leads to a significant decrease in STD for NLOS measurements, with values of 4.02 and 3.44~dB for 1400 and 2250~MHz, respectively, compared to 9.6 and 6.7~dB for 28 and 73~GHz scenarios, and 11.40 and 8.98~dB for measurements without considering diffraction.

 
\section{Conclusion}

{This letter presented the large-scale CI path loss models for the RMa scenario in the central rural area of South Korea based on measurements taken during spring.} The measurements were conducted at 1400 and 2250~MHz, and we collected massive data from 3,858 positions (1,262 LOS and 2,596 NLOS) and 4,957 positions (1,427 LOS and 3,530 NLOS) for 1400 and 2250~MHz, respectively. The PLE values of the original CI path loss models at 1400 and 2250~MHz were the same, resulting in 2.16 for LOS and a small difference of 0.11 for NLOS. Compared to the CI path loss models at 73~GHz \cite{MacCartney2017Rural, 3gpptech}, the PLE values were the same as 2.16 for LOS, and the PLE differences were less than 0.19 for NLOS. {We can therefore concluded that, along with the result that the CI model at 73~GHz leads to similar manner with the results in 3GPP~\cite{MacCartney2017Rural}, the CI path loss model has frequency-independent characteristics beyond the first meter of propagation, and hence we can deduce that our proposed CI models may be used for frequencies up to millimeter-wave~\cite{map, IAB, HBFD, UAVRIS, yhtvt}.}


When we extended the CI models to include diffraction loss component without regression, in case of 1400~MHz, the PLE values reduced to 2.24 and 2.08,  and furthermore, the STD dramatically decreased to 4.85 and 6.00 for the sKE and DB, respectively. When we included the diffraction loss component in regression, the PLE values changed to 2.28 and 2.22, and the STD further decreased to 4.60 dB and 4.02 dB for sKE and DB, respectively. In case of 2250~MHz, the PLE and STD values show a very similar manner to those of 1400~MHz, which is also shown in Fig.~\ref{figpathlossCI1400IksanSpr} and \ref{figpathlossCI2250IksanSpr}. According to the derived results, it is noted that the PLE values approached 2 when the diffraction component is isolated from the regression of overall path loss. {Therefore, we concluded that introducing the diffraction component CI model can dramatically contribute to modeling the path loss more accurately.}
		\section*{Appendix A}
\section*{Proof of Theorem~\ref{cim}}
{The cost function $\mathcal{J}$ is given by the mean-squared error (MSE) between the estimated vector $\mathbf{Y}=[y_1, y_2, \cdots , y_N]^{\mathrm{T}}$ and the measured data  $\mathbf{\widehat{Y}} = [\widehat{y_{1}}, \widehat{y_{2}}, \cdots, \widehat{y_{N}}]^{\mathrm{T}}$: 
\begin{equation} \label{eq:eq_cost}
\mathcal{J} = \frac{1}{N} \lVert  \mathbf{\widehat{Y}} - \mathbf{Y}  \rVert^2=\frac{1}{N} \sum_{n=1}^{N} \left( \widehat{y_{n}} - y_{n}  \right)^2.
\end{equation}
Therefore, minimizing $\mathcal{J}$ is equivalent to the least-square problem as
\begin{equation} \label{eq:eq_mincost}
\min_{A,B,C,D,...} \lVert  \mathbf{\widehat{Y}} - \mathbf{Y}  \rVert^2 = \min_{A,B,C,D,...} \sum_{n=1}^{N} \left( \widehat{y_{n}} - y_{n}  \right)^2.
\end{equation}
Differentiating the objective function of~(\ref{eq:eq_mincost}) with respect to $B$ and $C$ and giving the result to zero, we can get
\begin{equation} \label{eq:eq_partialBcostResultforCI}
\begin{split}
\sum_{n=1}^{N} x_{1,n}  \widehat{y_{n}} = \sum_{n=1}^{N} A x_{1,n} +  \sum_{n=1}^{N}B x_{1,n} x_{1,n} + \sum_{n=1}^{N} C x_{1,n} x_{2,n},\\
\sum_{n=1}^{N} x_{2,n}  \widehat{y_{n}} = \sum_{n=1}^{N} A x_{2,n} +  \sum_{n=1}^{N}B x_{2,n} x_{1,n} + \sum_{n=1}^{N} C x_{2,n} x_{2,n},
\end{split}
\end{equation}
respectively. Since $A$ is a constant and not to be regressed, we can modify ($\ref{eq:eq_partialBcostResultforCI}$) into a linear system with respect to $B$ and $C$ by
\begin{equation} \label{eq:eq_matrixfinalCIfiff}
\begin{split}
&\begin{pmatrix}
\sum_{n=1}^{N} x_{1,n}  x_{1,n} & \sum_{n=1}^{N} x_{1,n} x_{2,n}  \\
\sum_{n=1}^{N} x_{2,n}  x_{1,n} & \sum_{n=1}^{N} x_{2,n} x_{2,n}  \\
\end{pmatrix}
\begin{pmatrix}  B \\ C 
\end{pmatrix}
\\
&= \begin{pmatrix} \sum_{n=1}^{N} x_{1,n}  \widehat{y_{n}} - \sum_{n=1}^{N} A x_{1,n} \\ \sum_{n=1}^{N} x_{2,n}  \widehat{y_{n}} - \sum_{n=1}^{N} A x_{2,n} \\ 
\end{pmatrix}.
\end{split}
\end{equation}
Hence, by applying matrix inverse to the left-hand side of~(\ref{eq:eq_matrixfinalCIfiff}), the least-square solution of~(\ref{eq:eq_mincost}) is given by
\begin{equation} \label{eq:eq_matrixfinalinvCIdiff}
\begin{split}
\begin{pmatrix} B \\ C 
\end{pmatrix}
= &
\begin{pmatrix}
\sum_{n=1}^{N} x_{1,n}  x_{1,n} & \sum_{n=1}^{N} x_{1,n} x_{2,n}  \\
\sum_{n=1}^{N} x_{2,n}  x_{1,n} & \sum_{n=1}^{N} x_{2,n} x_{2,n}  \\
\end{pmatrix}^{-1}
\\
&\begin{pmatrix} 
\sum_{n=1}^{N} x_{1,n}  \widehat{y_{n}} - \sum_{n=1}^{N} A x_{1,n} \\ \sum_{n=1}^{N} x_{2,n}  \widehat{y_{n}} - \sum_{n=1}^{N} A x_{2,} \\
\end{pmatrix},
\end{split}
\end{equation}
and the theorem follows.~ $\blacksquare$}
\ifCLASSOPTIONcaptionsoff
  \newpage
\fi



\bibliographystyle{IEEEtran}
\bibliography{IEEEabrv,JDREFm}
\end{document}